\newcommand{\keywords}[1]{\par\addvspace\baselineskip
\noindent\keywordname\enspace\ignorespaces#1}
\newcommand{\PolRing}{\Bbbk[x_{1},\ldots,x_{N}]}
\newcommand{\SetOfAllMonomials}{{\mathbb{Z}}_{\geq 0}^{n}}
\newcommand{\eg}{{\em e.g.},\xspace }
\begin{document}

\mainmatter  

\title{On Consensus under Polynomial Protocols}

\titlerunning{Polynomial Consensus}

%
%
\author{Joel George Manathara\inst{1} \and Ambedkar Dukkipati\inst{2}
  \and Debasish Ghose\inst{3}
}
\authorrunning{J. G. Manathara, A. Dukkipati, D. Ghose}

\institute{Department of Aerospace Engineering\\
Indian Institute of Science, India\\
Email: joel@aero.iisc.ernet.in
\and
Department of Computer Science and Automation\\
Indian Institute of Science, India\\
Email: ambedkar@csa.iisc.ernet.in \\
\and
Department of Aerospace Engineering\\
Indian Institute of Science, India\\
Email: dghose@aero.iisc.ernet.in
}

%
%

\toctitle{Lecture Notes in Computer Science}
\tocauthor{Authors' Instructions}
\maketitle

\begin{abstract}
\emph{}
In this paper we explore the possibility of using computational
algebraic methods to analyze a class of consensus protocols. We state
some necessary conditions for convergence under consensus protocols
that are polynomials. 
\keywords{Gr{\"{o}}bner Bases, Dependency graph, Algebraic Variety}
\end{abstract}

\section{Introduction}
Consensus among agents is an important problem in multi-agent
systems and congestion/flow control in communication networks. We
formulate the problem as follows. 

Consensus protocol is an ordered set of functions
$(f_1,f_2,\ldots,f_N)$,
where $f_i:\mathbb{R}^N\longrightarrow\mathbb{R},\; i=1,\ldots,N$.
We consider the following first order consensus dynamics among $n$
agents under this protocol
\begin{equation}
\label{eqnset} 
\dot{x}_i= f_i(x_1,x_2,\ldots,x_N) \:\:\: i=1,\ldots,N. 
\end{equation}

Here $x_1,x_2,\ldots,x_N$ are the states of agents, the initial
values of which are assumed to be known to the respective agents.
The problem of interest is to determine whether the system of
equations in (\ref{eqnset}) attains a consensus. Consensus is an
equilibrium at which $x_1=x_2=\ldots=x_N$. In other words, we are
interested to know whether there exist an $\alpha\in\mathbb{R}$ such
that when $\mathbf{x}=\mathbf{x}_e$, where
$\mathbf{x}=(x_1,x_2,\ldots,x_N)$, and $\mathbf{x}_e$ is of the form
$\mathbf{x}_e=(1,1,\ldots,1)\alpha$, we have $f_i(\mathbf{x}_e)=0,
i=1,\ldots, N,$ and starting from an arbitrary $\mathbf{x}$, the
system (\ref{eqnset}) will converge to an equilibrium solution
$\mathbf{x}_e$. Equation~\ref{eqnset} can be written compactly as
\begin{equation}
\dot{\mathbf{x}} = \mathbf{f}(\mathbf{x}). \label{eqncompact}
\end{equation}
where $\mathbf{f}=(f_1,f_2,\ldots,f_N)$. Here $\mathbf{f}$, the
ordered $N\text{-tuple}$, is the consensus protocol and we are
interested in protocols which will take the dynamical system
(\ref{eqnset}) to consensus.

It is possible to associate a graph with every dynamical system of
the form (\ref{eqnset}). Let $\mathcal{G}=(\mathcal{V},\mathcal{E})$
be a such graph with vertices $\mathcal{V}$ and edges $\mathcal{E}$.
Then $\mathcal{V}=\{1,2,\ldots,N\}$ and a directed edge
$e_{ij}=(i,j)$ from node $i$ to node $j$ belongs to $\mathcal{E}$ if
$f_j$ is a function of $x_i$. If $f_j$ is a function of $x_i$, we
say that $x_i$ belongs to the support of $f_j$, that is, $x_i\in
supp(f_j)$. Following Stigler~\cite{stigler}, we call the graph
$\mathcal{G}$ the {\bf dependency graph} of (\ref{eqnset}). The
neighbors of an agent/node $x_i$ are all those nodes which are in
the support of $f_i$. Thus, the set of all neighbors of a node $i$
is
\begin{equation}
\mathcal{N}_i = \{j: e_{ji} \in \mathcal{E}\}. \label{eqnneighbor}
\end{equation}
The most widely analyzed protocol for consensus \cite{olfati1} is
a linear protocol of form
\begin{equation} \dot{x}_i = \sum_{j \in \mathcal{N}_i} (x_j - x_i),
\quad i=1,\ldots,N. \label{linprotocol}
\end{equation}
The {\bf adjacency matrix}, $A=[a_{ij}]$, associated with a graph
$\mathcal{G}=(\mathcal{V},\mathcal{E})$ is defined as
\[ a_{ij} = \left\{
\begin{array}{ll}
1 & \text{if } e_{ij}\in\mathcal{E} \\
0 & \text{otherwise.}
\end{array}
\right. \] A diagonal matrix called the {\bf degree matrix},
$D=[d_{ii}]$, is defined as $d_{ii}=$ number of edges directed
towards $x_i$, which is equal to ${\sum_{j=1}^N a_{ij}}$. The {\bf
graph Laplacian}, $L$, is defined as $L=D-A$. Using the definition
of graph Laplacian, we can rewrite (\ref{linprotocol}) as
\begin{equation}
\dot{\mathbf{x}}= - L \mathbf{x}
\end{equation}
There is a great amount of literature on consensus under linear
protocols. The assignment of a graph structure to the problem makes
the problem amenable to application of tools from algebraic graph
theory like graph Laplacian~\cite{olfati2}, stochasticity of
nonnegative adjacency matrices~\cite{ali}, etc. to analyze the
consensus in system given by (\ref{linprotocol}) or a normalized
form of it. However, these methods for analyzing stability and
convergence to consensus of the system of equations may not be
useful for nonlinear protocols.

There has been a few papers, in consensus literature, addressing
consensus under nonlinear protocols. Olfati-Saber and
Murray~\cite{olfati2} considers consensus under a nonlinear protocol
of form
\begin{equation}
\dot{x}_i = \sum_{j\in \mathcal{N}_i} h_i(x_j-x_i), \quad
i=1,\ldots,n \label{olfatiproto}
\end{equation}
where $h_i, i=1,\ldots,n$ are functions which are uneven, locally
Lipschitz, and strictly increasing.

Liu and Chen~\cite{liu} extends this to protocols of form
\begin{equation}
\dot{x}_i = \sum_{j\in \mathcal{N}_i} \left(h(x_j)-h(x_i)\right),
\quad i=1,\ldots,N \label{Introduction:Equation:LiuProtocol}
\end{equation}
where $h$ is assumed to be an increasing function. However, this
protocol demands that $h$ be same for all agents.

Moreau~\cite{moreau} proves the convergence of consensus protocol
under the assumption that the protocol is such that, in the
discretized version of it, the updated value of a node is a strict
convex combination of current values of the node and its
neighbors . This implies that the continuous time protocol of the
form
\begin{equation}
\dot{x}_i = \sum_{j\in \mathcal{N}_i} h_j(\mathbf{x})(x_j-x_i),
\quad i=1,\ldots,N \label{Introduction:Equation:MoreauProtocol}
\end{equation}
will achieve consensus.

This paper we pose problem of consensus under polynomial protocols
in the framework of 
computational algebra and give necessary conditions for the convergence. 
This paper is organized as follows. In
\S~\ref{PreliminaryObservations} we give preliminary observations
on polynomial consensus and review basic background in computational
algebra that is required. We present main results in
\S~\ref{NecessaryConditions} and give concluding remarks in
\S~\ref{Conclusions}

\section{Preliminary Observations and Basic Computational Algebra}
\label{PreliminaryObservations}
\subsection{Preliminary Observations}
We give few necessary conditions for convergence to a consensus
under polynomial protocols, using the tools and language of
algebraic geometry. It is clear that for a consensus to be achieved,
the protocol $\mathbf{f}$ should use all the nodal values. A
consensus is not achieved, except for some particular initial
conditions, if values of one or mode nodes are not used in the
dynamics (\ref{eqncompact}). This leads us to the following
proposition.
\begin{proposition}
{\label{Consensus:Proposition:PolynomialProtocol:NecessaryConditionSupport}
If the system $\dot{x}_i=f_i,i=1,\ldots,N$, achieves consensus, then
for all $j\in\{i,\ldots,N\}$, there exists $i\in\{i,\ldots,N\}$ such
that $x_j\in supp(f_i)$.}
\end{proposition}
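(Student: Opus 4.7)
The plan is to prove the contrapositive: if there is an index $j \in \{1,\ldots,N\}$ with $x_j \notin supp(f_i)$ for every $i$, then consensus cannot be achieved from arbitrary initial conditions, in the sense required by the paper's definition. The key observation is that under this hypothesis the variable $x_j$ is entirely absent from the right-hand side of the ODE system, so the evolution of the remaining $N-1$ coordinates is completely decoupled from $x_j$, and $x_j$ itself evolves according to a forcing term that does not feel its own initial value.

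More concretely, I would first note that the subsystem $\{\dot{x}_k = f_k(\mathbf{x})\}_{k \neq j}$ is a closed ODE system in the variables $\{x_k\}_{k \neq j}$, since by hypothesis none of the $f_k$'s depend on $x_j$. Similarly $\dot{x}_j = f_j$ depends only on $\{x_k\}_{k \neq j}$. Hence, by uniqueness of solutions to the initial value problem, if $\mathbf{x}(t)$ and $\tilde{\mathbf{x}}(t)$ are two trajectories whose initial conditions agree on all coordinates except possibly the $j$-th, then $x_k(t) = \tilde{x}_k(t)$ for all $k \neq j$ and all $t$, and
\[
\tilde{x}_j(t) - x_j(t) = \tilde{x}_j(0) - x_j(0) + \int_0^t \bigl(f_j(\tilde{\mathbf{x}}(s)) - f_j(\mathbf{x}(s))\bigr)\,ds = \tilde{x}_j(0) - x_j(0),
\]
since the integrand vanishes.

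Now suppose for contradiction that the system achieves consensus from arbitrary initial data. Pick any starting point $\mathbf{x}^{(0)}$ whose trajectory converges to a consensus value $\alpha$, and form $\tilde{\mathbf{x}}^{(0)}$ by replacing $x_j^{(0)}$ with $x_j^{(0)} + c$ for some $c \neq 0$. By the observation above, along the new trajectory $\tilde{x}_k(t) \to \alpha$ for $k \neq j$ while $\tilde{x}_j(t) \to \alpha + c$, so the new trajectory fails to reach consensus, contradicting the standing assumption.

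The main (and essentially only) subtlety is to pin down the meaning of \emph{achieves consensus}: the paper's definition in \S1 demands convergence from \emph{arbitrary} initial $\mathbf{x}$, and the argument above crucially exploits this freedom to perturb $x_j^{(0)}$ alone. If one weakened the requirement to convergence from a single specific initial condition, the proposition would no longer follow. Otherwise the proof uses nothing beyond basic ODE uniqueness and makes no use of polynomiality, so the conclusion in fact holds for any protocol in which the $f_i$ are regular enough for uniqueness of solutions.
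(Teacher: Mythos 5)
Your proof is correct, and it follows the same line of reasoning the paper uses: the paper gives no formal proof, only the informal remark preceding the proposition that consensus fails (except for special initial conditions) when some nodal value is used nowhere in the dynamics, and your decoupling-plus-perturbation argument via ODE uniqueness is exactly the formalization of that remark. Your closing caveat about needing convergence from arbitrary initial data matches the paper's own qualification, so nothing further is required.
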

The proposition asserts that the dependency graph of a protocol that
achieve consensus should not have any isolated nodes.

Also, if more than one nodal values are not being updated, it is not
possible to arrive at a consensus in general. However, there might
exist a few particular initial conditions for which a consensus on
nodal values is reached. For example, the initial values of the
nodes whose values are not being updated are equal and equal to the
consensus value achieved by other nodes. However, this is an
exception rather than a rule. Therefore, we have the following
necessary condition.
\begin{proposition}
{\label{Consensus:Proposition:PolynomialProtocol:NecessaryConditionZeroPolynomial}
If more than one $f_i$ are zero polynomials, then the system
$\dot{x}_i=f_i,i=1,\ldots,N$, will not achieve consensus.}
\end{proposition}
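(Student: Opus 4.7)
The plan is to argue by direct contradiction against the definition of consensus given in the introduction, which requires convergence to some equilibrium of the form $\mathbf{x}_e=(1,1,\ldots,1)\alpha$ starting from an arbitrary initial condition $\mathbf{x}(0)$. So I would suppose for contradiction that the system achieves consensus while there exist distinct indices $i,j\in\{1,\ldots,N\}$ with $f_i\equiv 0$ and $f_j\equiv 0$ as polynomials.

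First, I would translate the hypothesis on $f_i$ and $f_j$ into a dynamical statement: since $f_i$ and $f_j$ vanish identically on $\mathbb{R}^N$, the trajectories satisfy $\dot{x}_i(t)=f_i(\mathbf{x}(t))=0$ and $\dot{x}_j(t)=f_j(\mathbf{x}(t))=0$ for all $t\geq 0$, regardless of the other coordinates. Integrating gives the invariance $x_i(t)=x_i(0)$ and $x_j(t)=x_j(0)$ for every $t\geq 0$.

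Second, I would exploit the freedom in choosing initial conditions. By the consensus definition, convergence to a common value $\alpha$ must hold from an arbitrary $\mathbf{x}(0)$. Choose any initial condition with $x_i(0)\neq x_j(0)$; for instance, $x_i(0)=0$ and $x_j(0)=1$, with the remaining coordinates arbitrary. By the previous step, $x_i(t)-x_j(t)=x_i(0)-x_j(0)\neq 0$ for all $t\geq 0$, so the trajectory cannot approach any point of the consensus manifold $\{(\alpha,\ldots,\alpha):\alpha\in\mathbb{R}\}$. This contradicts the assumption that consensus is achieved, completing the argument.

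There is no real obstacle here; the whole content is the invariance of coordinates whose vector field component vanishes identically, combined with the universal quantifier over initial conditions hidden in the definition of consensus. The only subtlety worth flagging explicitly in the write-up is that a single zero $f_i$ is not ruled out (the lone frozen coordinate can still equal the common limit $\alpha$ of the others, provided $x_i(0)=\alpha$, which happens for a measure-zero set of initial data but is not outright impossible), whereas two or more zero $f_i$'s force two independent invariants $x_i(0)$ and $x_j(0)$, which cannot simultaneously match a single $\alpha$ for arbitrary starts, explaining the ``more than one'' in the hypothesis.
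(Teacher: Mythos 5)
Your proof is correct and follows essentially the same reasoning the paper gives informally before the proposition: coordinates with identically zero $f_i$ are frozen at their initial values, and since consensus must hold for arbitrary initial data, choosing the two frozen coordinates unequal rules it out, with agreement possible only on the exceptional set of initial conditions where they coincide. The paper does not write a formal proof, so your argument is simply a clean formalization of its stated rationale.
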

A result from theory of linear consensus protocols states that a
consensus is not achieved if the underlying graph does not have a
directed spanning tree \cite{ren}.
Proposition~\ref{Consensus:Proposition:PolynomialProtocol:NecessaryConditionZeroPolynomial}
has the same flavor.

From here onwards when we say that a protocol $\mathbf{f}$ satisfies
the conditions for
Propositions~\ref{Consensus:Proposition:PolynomialProtocol:NecessaryConditionSupport}
and
\ref{Consensus:Proposition:PolynomialProtocol:NecessaryConditionZeroPolynomial},
we mean that the protocol involves all the nodal values and at most
one of the polynomials in $\mathbf{f}$ is a zero polynomial. And,
whenever one of the polynomials is a zero polynomial, by the set
$\{f_1,\ldots,f_N\}$ we mean the set without the zero polynomial.

If the system of equations in Eq.~(\ref{eqncompact}) leads to a
consensus, then we have $\dot{\mathbf{x}}=0$, which implies
$\mathbf{f}(\mathbf{x})=0$, asymptotically. The equilibrium points
of the system of equations in Eq.~(\ref{eqncompact}) are the `roots'
of the set of polynomial equations $\mathbf{f}(\mathbf{x})=0$. Now
we need some notions of algebraic geometry and Gr{\"{o}}bner bases.

\subsection{Basics of Algebraic Geometry and Gr\"{o}bner bases}
       Throughout this paper, $\Bbbk$ represents a field (\eg
       $\mathbb{R}$, $\mathbb{C}$).
       Set of all monomials in indeterminates $x_{1},\ldots, x_{N}$ is
       denoted by $\SetOfAllMonomials$ and set of all
       polynomials in indeterminates $x_{1},\ldots, x_{N}$
       with coefficients in $\Bbbk$ is denoted by $\PolRing$. Let
       $f_{1},\ldots,f_{s} \in \PolRing$.
       We use the notation $\mathcal{V}(f_{1}, \ldots, f_{s})
       =V$ to represent the varieties, where
       \begin{displaymath}
         V = \{(c_{1},\ldots c_{n}) \in \Bbbk^{N} : f_{i}(c_{1},\ldots
         c_{N}) =0, 1 \leq i \leq s \} \enspace.
       \end{displaymath}
       $V$ is uniquely
       determined by the ideal
       generated by $f_{1}, \ldots, f_{s}$. This ideal is denoted by ${\langle
       f_{1}, \ldots, f_{s}  \rangle}$ and hence we have
       \begin{displaymath}
     {\mathcal{V}}(f_{1}, \ldots, f_{s}) =
       {\mathcal{V}} ({\langle
       f_{1}, \ldots, f_{s}  \rangle} ) \enspace.
       \end{displaymath}

The set of equilibrium points of most of the consensus protocols
contain the subspace $x_1=\cdots = x_N$. Let us call this subspace
where states of all agents are equal as $\mathcal{S}$. If a protocol
$\mathbf{f}$ leads to consensus, then it should have at least one
equilibrium point belonging to the subspace $\mathcal{S}$. Thus we
get a necessary condition for convergence of Eq.~(\ref{eqncompact})
as follows. If Eq.~(\ref{eqncompact}) converges, then
$\mathcal{S}\cap \mathcal{V}(\mathbf{f})$ is nonempty.
Geometrically, the subspace $\mathcal{S}\subset \Bbbk^N$ is a
solution to set of equations
\begin{equation}
\mathbf{s} = \{x_1-x_2,x_2-x_3,\ldots,x_{N-1}-x_N\}
\label{Consensus:Equation:PolynomialProtocol:Subspace}
\end{equation}
In the language of algebraic geometry, $\mathcal{S}$ is the variety
of $\mathbf{s}$, that is, $\mathcal{S} = \mathcal{V}(\mathbf{s}).$
Therefore, a consensus would imply $\mathcal{V}(\mathbf{s})\cap
\mathcal{V}(\mathbf{f})$ is non-empty.

Let ${J}$ be the ideal generated by set of polynomials $\mathbf{s}$
describing the subspace $\mathcal{S}$ as described before. It is
easy to see that the affine variety of the set of polynomials
$\mathbf{s}$ is equal to the affine variety of the ideal ${J}$
generated by them \cite{cox}, that is, $\mathcal{V}({J})=
\mathcal{V}(\mathbf{s})=\mathcal{S}$.

Since $\mathcal{S}$ is an irreducible variety (cannot be written as
the union of non-empty varieties), the ideal ${J}$ is prime
\cite{cox}. Since $J$ is prime, the radical ideal\footnote{The
radical ideal of $J$, denoted by $\sqrt{J}$, is the set
$\{f\in\Bbbk[x_1,\ldots,x_N]:f^n\in J \text{ for some } n \in
\mathbb{Z}_{> 0}\}$} of ${J}$ is itself. Now, by Hilbert's Strong
Nullstellansatz \cite{cox}, the ideal $\mathcal{I}(\mathcal{S})$,
the ideal of all polynomials that vanishes at every point on
$\mathcal{S}$, is $J$.

Throughout this article, we will be using some results from
algebraic geometry. These will be given as propositions and theorems
without proofs and we refer an interested reader for details to Cox
et al.~\cite{cox}.
\begin{theorem}
{\label{Consensus:Theorem:PolynomialProtocol:SumofIdeals} \cite{cox}
If $I$ and $J$ are ideals in $\Bbbk[x_1,\ldots,x_N]$, then
$\mathcal{V}(I+J)=\mathcal{V}(I)\cap \mathcal{V}(J)$.}
\end{theorem}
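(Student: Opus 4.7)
The plan is to prove the set equality by showing both inclusions, using only the definition of $I+J$ as the smallest ideal containing both $I$ and $J$, and the definition of the variety of an ideal as the common zero set of all its elements.

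First I would establish the forward inclusion $\mathcal{V}(I+J) \subseteq \mathcal{V}(I) \cap \mathcal{V}(J)$. The key observation is that $I \subseteq I+J$ and $J \subseteq I+J$, since any $f \in I$ can be written as $f+0$ with $0 \in J$, and symmetrically for $J$. A general principle (easily verified from the definition of $\mathcal{V}$) is that if $I_1 \subseteq I_2$ then $\mathcal{V}(I_2) \subseteq \mathcal{V}(I_1)$, because vanishing on a larger set of polynomials is a stronger condition. Applying this to both $I \subseteq I+J$ and $J \subseteq I+J$ gives $\mathcal{V}(I+J) \subseteq \mathcal{V}(I)$ and $\mathcal{V}(I+J) \subseteq \mathcal{V}(J)$, hence the intersection inclusion.

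For the reverse inclusion $\mathcal{V}(I) \cap \mathcal{V}(J) \subseteq \mathcal{V}(I+J)$, I would take an arbitrary point $\mathbf{c} \in \mathcal{V}(I) \cap \mathcal{V}(J)$ and an arbitrary element $h \in I+J$. By definition of the sum of ideals, $h$ can be written as $h = f + g$ for some $f \in I$ and $g \in J$. Evaluating at $\mathbf{c}$ gives $h(\mathbf{c}) = f(\mathbf{c}) + g(\mathbf{c}) = 0 + 0 = 0$, since $\mathbf{c}$ lies in both $\mathcal{V}(I)$ and $\mathcal{V}(J)$. Because this holds for every $h \in I+J$, we conclude $\mathbf{c} \in \mathcal{V}(I+J)$.

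There is essentially no main obstacle here; the statement follows directly from unpacking definitions, and the only subtlety worth flagging is the fact that every element of $I+J$ admits a representation as a single sum $f+g$ (rather than a longer sum of elements alternately from $I$ and $J$), which is immediate since $I$ and $J$ are each closed under addition. No machinery beyond the definitions of ideal sum and affine variety is needed.
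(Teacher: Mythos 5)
Your proof is correct: both inclusions are argued properly, and the representation $h=f+g$ is immediate from the paper's own definition of $I+J$. The paper itself states this theorem without proof, citing Cox et al., and your argument is exactly the standard double-inclusion proof found there, so nothing needs to be added or changed.
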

 Here, $I+J$
is defined as follows.
\begin{definition}
{\label{Consensus:Definition:PolynomialProtocol:SumofIdeals} If $I$
and $J$ are ideals of ring $\Bbbk[x_1,\ldots,x_N]$, then the
{\bf sum} of $I$ and $J$, denoted by $I+J$, is the set
\[I+J = \{f+g|f\in I \;\text{and}\; g\in J\}. \]}
\end{definition}
In fact, $I+J$ is the smallest ideal containing both $I$ and $J$
\cite{cox}. We have the following proposition.
\begin{proposition}
{
\label{Consensus:Proposition:PolynomialProtocol:SumofIdealsVarietyNonEmptyforConsensus}
Let ${I}$ be the ideal generated by $\{f_1,\ldots,f_N\}, f_i \in
\Bbbk[x_1,\ldots,x_N]$, $i = 1,\ldots,N$ and ${J}$ be the ideal
$\langle x_1-x_2,x_2-x_3,\ldots,x_{N-1}-x_N \rangle$. Let
$\mathcal{S} \subset \Bbbk^N$ be the subspace given as
$\mathcal{S}=\{
(x_1,\ldots,x_N)\in\Bbbk^N:x_1=x_2=\cdots=x_N\}$. Then, if the
polynomial dynamical system $\dot{x}_i = f_i, i = 1,\ldots,N$
attains consensus, then $\mathcal{V}({I}+{J})$ is not empty and
${I}+{J}$ is a proper subset of $\Bbbk[x_1,\ldots,x_N]$. }
\end{proposition}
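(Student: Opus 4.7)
The plan is to unpack the definition of consensus to produce an explicit point in $\mathcal{V}(I)\cap\mathcal{V}(J)$, and then use Theorem~\ref{Consensus:Theorem:PolynomialProtocol:SumofIdeals} to move from this intersection to $\mathcal{V}(I+J)$; properness of $I+J$ will then follow essentially for free from nonemptiness of the variety.

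First I would invoke the definition of consensus recalled in the introduction: if the system $\dot{x}_i = f_i$ attains consensus, then starting from an arbitrary $\mathbf{x}$ the trajectory converges to some $\mathbf{x}_e = (\alpha,\alpha,\ldots,\alpha)$ with $\mathbf{f}(\mathbf{x}_e)=0$. The existence of such an equilibrium is the crucial content of the hypothesis and is exactly what I need. Because all coordinates of $\mathbf{x}_e$ agree, the polynomials $x_1-x_2,\ldots,x_{N-1}-x_N$ all vanish at $\mathbf{x}_e$, so $\mathbf{x}_e\in\mathcal{V}(J)=\mathcal{S}$. Simultaneously, $f_i(\mathbf{x}_e)=0$ for all $i$ means $\mathbf{x}_e\in\mathcal{V}(I)$. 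Hence $\mathbf{x}_e\in\mathcal{V}(I)\cap\mathcal{V}(J)$.

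Next I would apply Theorem~\ref{Consensus:Theorem:PolynomialProtocol:SumofIdeals}, which gives $\mathcal{V}(I+J)=\mathcal{V}(I)\cap\mathcal{V}(J)$. Combined with the previous step, $\mathbf{x}_e\in\mathcal{V}(I+J)$, so $\mathcal{V}(I+J)$ is nonempty.

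Finally, for the properness claim, I would argue by contraposition: if $I+J$ were equal to the whole ring $\Bbbk[x_1,\ldots,x_N]$, then $1\in I+J$, and since the constant polynomial $1$ never vanishes, $\mathcal{V}(I+J)$ would be empty, contradicting what was just established. Therefore $I+J$ is a proper ideal. The only mild point to note is that this direction does not require any Nullstellensatz or algebraic closedness of $\Bbbk$; it is the trivial direction ``unit ideal $\Rightarrow$ empty variety.'' I do not foresee a real obstacle here: the proposition is essentially a direct translation of the dynamical statement ``a consensus equilibrium exists'' into the algebro-geometric language set up in the preceding paragraphs, so the only care needed is to cite the definition of consensus from \S1 cleanly and to apply Theorem~\ref{Consensus:Theorem:PolynomialProtocol:SumofIdeals} in the correct direction.
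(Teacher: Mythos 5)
Your proof is correct and follows essentially the same route as the paper: consensus forces an equilibrium on the diagonal $\mathcal{S}$, hence $\mathcal{V}(I)\cap\mathcal{V}(J)\neq\emptyset$, and Theorem~\ref{Consensus:Theorem:PolynomialProtocol:SumofIdeals} converts this into $\mathcal{V}(I+J)\neq\emptyset$, from which properness follows. The one place you deviate is the final step: the paper justifies properness by citing Hilbert's Weak Nullstellensatz, whereas you use only the trivial observation that the unit ideal has empty variety (since $1$ vanishes nowhere); your version is in fact the cleaner one, since the implication needed here holds over any field and the Weak Nullstellensatz properly speaking asserts the converse direction (proper ideal $\Rightarrow$ nonempty variety, over an algebraically closed field), so no algebraic closedness of $\Bbbk$ is required for this part of the statement.
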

\begin{proof}
{ A necessary condition for consensus is that at least one of the
stationary points of the system $\dot{x}_i = f_i, i = 1,\ldots,n$
should belong to $\mathcal{S}$. Set of stationary points or the
invariant set of this system is the solution to set of polynomial
equations $f_1=0,\ldots,f_N=0$ which is given by $\mathcal{V}(I)$.
Thus the necessary condition demands that $\mathcal{V}(I) \cap
\mathcal{S}$ is not empty. Since $\mathcal{S}$ is Zariski
closed\footnote{A set belonging to $\Bbbk^N$ is said to be
Zariski closed if it is the solution to a set of polynomials in
$\Bbbk[x_1,\ldots,x_N]$.} as $\mathcal{S}$ is the variety of
ideal $\mathcal{J}$, we get $\mathcal{V}(I) \cap
\mathcal{V}({J})\neq \emptyset$. This along with
Theorem~\ref{Consensus:Theorem:PolynomialProtocol:SumofIdeals} gives
that $\mathcal{V}({I}+{J})$ is not empty. Since
$\mathcal{V}({I}+{J})$ is not empty, by Hilbert's Weak
Nullstellensatz \cite{cox}, ${I}+{J}$ is proper ideal of
$\Bbbk[x_1,\ldots,x_N]$.}
\end{proof}
The statements on ideals can be checked by calculating their
Gr\"{o}bner bases which can be done using any symbolic package that
supports algebraic geometric calculations, given the set of
polynomials that generate the ideal. Now, we define a Gr\"{o}bner
basis.

Gr{\"{o}}bner bases is a
    generalization of the division
    algorithm in a single variable case ($\Bbbk[x]$) to the multivariate case
    ($\Bbbk[x_{1},\ldots,x_{N}]$). Gr\"{o}bner basis also generalizes Gaussian elimination
    in linear polynomials to nonlinear polynomials. For the multivariate division algorithm,
    we need the notion of monomial order.
       \begin{definition}
       {
       A {\bf monomial order} or {\bf term order} on $k[x_{1},\ldots,x_{N}]$ is
       a relation $\prec$ on ${\mathbb{Z}}_{\geq 0}^{N}$ that  satisfies
       following conditions (i)~$\prec$ is a total ordering on
       ${\mathbb{Z}}_{\geq 0}^{N}$, (ii)~if $\alpha \prec \beta$, for
       $\alpha, \beta \in {\mathbb{Z}}_{\geq 0}^{N}$ then for any $\gamma
       \in {\mathbb{Z}}_{\geq 0}^{N}$ it holds $\alpha + \gamma \prec \beta
       + \gamma$, and (iii) $\prec$ is a well-ordering on
       ${\mathbb{Z}}_{\geq 0}^{N}$.}
       \end{definition}
       Given such an ordering $\prec$, one can define the leading term
       of non-zero polynomial $f \in \Bbbk[x_{1},\ldots,x_{N}]$ as a term of $f$ (the
       coefficient times its monomial) whose monomial is maximal for
       $\prec$. We denote this leading term by $\text{LT}_{\prec}(f)$ and
       the corresponding monomial by $\text{LM}_{\prec}(f)$.
       \begin{definition}{
    An ideal $\mathfrak{a} \subset \Bbbk[x_{1},\ldots,x_{N}]$ is said to be a
    monomial
    ideal if there is a set $A \subset {\mathbb{Z}}_{\geq 0}^{N}$,
    possibly
    infinite, such that $\mathfrak{a} = \langle x^{\alpha}: \alpha
    \in
    A \rangle$.}
       \end{definition}
       Given any ideal $\mathfrak{a} \subset \Bbbk[x_{1},\ldots,x_{N}]$, the ideal
       defined as $\langle \text{LM}_{\prec}(f): f \in \mathfrak{a} \rangle$
       is a monomial ideal and is denoted by $\text{LM}_{\prec}(\mathfrak{a})$,
       which is known as leading monomial ideal of $\mathfrak{a}$. By
       Dickson's
       lemma~\cite[p.~69]{cox},
         the ideal
       $\text{LM}_{\prec}(\mathfrak{a})$ is generated by a finite set of
       monomials. Dickson's lemma and the multivariate division
       algorithm leads to a proof of Hilbert bases theorem which states
       that every
       polynomial ideal can be finitely generated, which further lead
       to a definition of Gr{\"{o}}bner
       basis~\cite[\S~2.5]{cox}.
       \begin{definition}{
        Fix a monomial order $\prec$ on $\Bbbk[x_{1},\ldots,x_{N}]$. A finite subset $G =
        \{g_{1},\ldots,g_{s}\}$ of an ideal $\mathfrak{a} \subset
        \Bbbk[x_{1},\ldots,x_{N}]$ is a Gr{\"{o}}bner basis if and only if
        $
           \text{LM}_{\prec}(\mathfrak{a})$ $=$ $\langle \text{LM}_{\prec}(g_{1}),
           \ldots,$
           $\text{LM}_{\prec}(g_{s})  \rangle.
        $}
       \end{definition}
       Given a set of generators of an ideal, the Buchberger
       algorithm~\cite{buchberger}
       can be used to compute a Gr{\"{o}}bner basis of the ideal with respect
       to various term orders. The algorithm and its variants are
       implemented in most symbolic computation programs. Note that
       a Gr{\"{o}}bner basis is not unique, but one can transform it to
       a reduced
       Gr{\"{o}}bner basis which is unique for every ideal in
       $\Bbbk[x_{1},\ldots,x_{N}]$. In the sequel, when we say `the
       Gr\"{o}bner basis', we mean the reduced Gr\"{o}bner basis.
       The Buchberger
       algorithm provides a common generalization of the
       Euclidean division algorithm and the Gaussian elimination algorithm to
       multivariate polynomial rings.

\section{Consensus under polynomial protocols: Necessary conditions}
\label{NecessaryConditions}
Now, an equivalent statement of
Proposition~\ref{Consensus:Proposition:PolynomialProtocol:SumofIdealsVarietyNonEmptyforConsensus}
in terms of Gr\"{o}bner basis can be given as follows.
\begin{corollary}
{\label{Consensus:Corollary:PolynomialConsensus:NonEmptyGrobnerBasis}
If ${G}$ is a Gr\"{o}bner basis of ${I}+{J}$ where $I$ and $J$ are
as defined in
Proposition~\ref{Consensus:Proposition:PolynomialProtocol:SumofIdealsVarietyNonEmptyforConsensus},
then the affine variety of ${G}$ is not empty and the ideal
generated by ${G}$ is a proper ideal of
$\Bbbk[x_1,\ldots,x_N]$.}
\end{corollary}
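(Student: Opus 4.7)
The plan is simply to unpack the definition of a Gröbner basis and then invoke Proposition~\ref{Consensus:Proposition:PolynomialProtocol:SumofIdealsVarietyNonEmptyforConsensus} verbatim. The key fact I would rely on is the standard consequence of the Gröbner basis definition: whenever $G = \{g_1, \ldots, g_s\}$ is a Gröbner basis of an ideal $\mathfrak{a}$, the inclusion $G \subset \mathfrak{a}$ gives $\langle G \rangle \subset \mathfrak{a}$, while the equality $\mathrm{LM}_{\prec}(\langle G \rangle) = \mathrm{LM}_{\prec}(\mathfrak{a})$ together with the multivariate division algorithm forces the reverse inclusion, so that $\langle G \rangle = \mathfrak{a}$.

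First I would apply this to $\mathfrak{a} = I + J$ to obtain $\langle G \rangle = I + J$. Taking affine varieties of both sides then yields $\mathcal{V}(G) = \mathcal{V}(\langle G \rangle) = \mathcal{V}(I+J)$. At this point the corollary is reduced to the content of the previous proposition.

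Next I would invoke Proposition~\ref{Consensus:Proposition:PolynomialProtocol:SumofIdealsVarietyNonEmptyforConsensus}, whose hypothesis (that the polynomial dynamical system $\dot{x}_i = f_i$ attains consensus) is inherited implicitly by the corollary. That proposition gives both conclusions for $I+J$: its variety is non-empty and it is a proper ideal. Transporting these across the equality $\langle G \rangle = I+J$ yields exactly the two assertions of the corollary, namely that $\mathcal{V}(G)$ is non-empty and that $\langle G \rangle$ is a proper ideal of $\Bbbk[x_1,\ldots,x_N]$.

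There is no genuine obstacle here, since the corollary is a notational restatement rather than a new substantive claim; it merely repackages the conclusion so that it can be verified computationally via Buchberger's algorithm. The only small thing worth stating carefully in the write-up is the equality $\langle G \rangle = \mathfrak{a}$ for a Gröbner basis, as the definition given earlier phrases Gröbner bases in terms of leading monomial ideals rather than in terms of generating $\mathfrak{a}$ itself.
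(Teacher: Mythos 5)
Your argument is correct and matches the paper's (implicit) reasoning: the paper presents this corollary as an immediate restatement of Proposition~\ref{Consensus:Proposition:PolynomialProtocol:SumofIdealsVarietyNonEmptyforConsensus}, relying on exactly the fact you spell out, namely that a Gr\"{o}bner basis $G$ of $I+J$ satisfies $\langle G \rangle = I+J$ and hence $\mathcal{V}(G)=\mathcal{V}(I+J)$. Your added care in justifying $\langle G\rangle = I+J$ from the leading-monomial definition via the division algorithm is a reasonable, standard filling-in of a detail the paper leaves to the reader.
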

The above condition will always be satisfied if none of the $f_i, i
= 1,\ldots,N$, has a constant term, as then, $\mathbf{x}=\mathbf{0}$
is a common solution of polynomials in both $I$ and $J$. We call
this the {\it trivial consensus} where a consensus occurs if the
initial conditions on all the nodes are simultaneously zeros.
Similarly, consensus occurs for other special initial cases too.
Since the conditions of
Proposition~\ref{Consensus:Proposition:PolynomialProtocol:SumofIdealsVarietyNonEmptyforConsensus}
also holds for protocols that can achieve trivial consensus also, it
is a weak result.

From
Proposition~\ref{Consensus:Proposition:PolynomialProtocol:SumofIdealsVarietyNonEmptyforConsensus}
we have that, to show that a system does not attain consensus, it is
enough to show that $I+J$ is the entire polynomial ring
$\Bbbk[x_1,\ldots,x_N]$, or equivalently by Hilbert's Weak
Nullstellansatz, the Gr\"{o}bner basis of ideal $I+J$ is $1$ (since
$1$ generates the whole ring $\Bbbk[x_1,\ldots,x_N]$ which has
an empty variety). For this, we need a way to calculate the ideal
$I+J$ given $I$ and $J$, which the following proposition gives.
\begin{proposition}
{\label{Consensus:Proposition:PolynomialProtocol:SumOfIdealsIsIdeal}
\cite{cox}If $I=\langle f_1,\ldots,f_r \rangle$ and $J=\langle
g_1,\ldots,g_s \rangle$, then $I+J = \langle
f_1,\ldots,f_r,g_1,\ldots,g_s \rangle$.}
\end{proposition}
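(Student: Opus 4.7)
The plan is to establish the identity by proving the two inclusions separately, exploiting only the definition of $I+J$ given in Definition~\ref{Consensus:Definition:PolynomialProtocol:SumofIdeals} and the description of the ideal generated by a finite set as the set of polynomial combinations of its generators.

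For the forward inclusion $I+J \subseteq \langle f_1,\ldots,f_r,g_1,\ldots,g_s\rangle$, I would pick an arbitrary $h \in I+J$ and, by the definition, write $h = f+g$ with $f\in I$ and $g\in J$. Because $I = \langle f_1,\ldots,f_r\rangle$, there exist $a_1,\ldots,a_r \in \Bbbk[x_1,\ldots,x_N]$ with $f = \sum_{i} a_i f_i$, and analogously $g = \sum_{j} b_j g_j$ for suitable $b_j$. Adding these expressions exhibits $h$ as a polynomial combination of $f_1,\ldots,f_r,g_1,\ldots,g_s$, hence $h \in \langle f_1,\ldots,f_r,g_1,\ldots,g_s\rangle$. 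For the reverse inclusion, I would take any element $k \in \langle f_1,\ldots,f_r,g_1,\ldots,g_s\rangle$, expand it as $k = \sum_i a_i f_i + \sum_j b_j g_j$, observe that $\sum_i a_i f_i \in I$ and $\sum_j b_j g_j \in J$ since $I,J$ are ideals closed under multiplication by ring elements, and conclude $k \in I+J$.

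A small preliminary step that I would mention, for completeness, is the verification that the set $I+J$ of Definition~\ref{Consensus:Definition:PolynomialProtocol:SumofIdeals} really is an ideal: closure under addition uses $(f_1+g_1)+(f_2+g_2) = (f_1+f_2)+(g_1+g_2)$ with $f_1+f_2 \in I$ and $g_1+g_2 \in J$, and closure under multiplication by $h \in \Bbbk[x_1,\ldots,x_N]$ uses $h(f+g) = hf + hg$ with $hf \in I$ and $hg \in J$. I do not anticipate any genuine obstacle: the statement is a purely formal manipulation of generators, and the heart of the argument is just bookkeeping between two equivalent presentations of the same ideal.
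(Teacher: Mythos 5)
Your argument is correct and complete: both inclusions are handled properly, and the preliminary check that $I+J$ is itself an ideal (which the paper only asserts via the remark that $I+J$ is the smallest ideal containing $I$ and $J$) is a sensible addition. Note that the paper does not prove this proposition at all --- it is stated with a citation to Cox et al.\ as one of the background results imported without proof --- and your two-inclusion bookkeeping is exactly the standard argument found in that reference, so there is no divergence in method, only the difference that you supply the details the paper delegates to the citation.
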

Thus to check the
Proposition~\ref{Consensus:Proposition:PolynomialProtocol:SumofIdealsVarietyNonEmptyforConsensus},
given $\langle\mathbf{f}\rangle$, the ideal generated by protocols,
and $\langle\mathbf{s}\rangle$, the ideal the variety of which is
$\mathcal{S}$, we need to calculate the variety of ideal
$\langle\mathbf{f}\rangle+\langle\mathbf{s}\rangle$ which by
Proposition~\ref{Consensus:Proposition:PolynomialProtocol:SumOfIdealsIsIdeal}
is equal to the variety of $\langle\mathbf{f},\mathbf{s}\rangle$.
Since a Gr\"{o}bner basis of $\langle\mathbf{f},\mathbf{s}\rangle$
has the same variety as that of
$\langle\mathbf{f},\mathbf{s}\rangle$, it is enough to calculate the
Gr\"{o}bner basis of $\langle\mathbf{f},\mathbf{s}\rangle$ and look
at its variety. This can be done using any software that support
computation of Gr\"{o}bner basis. For the examples given in this
chapter, we use Mathematica.



\begin{example}
{\label{Consensus:Example:PolynomialProtocol:BiasNonlinearity} We
consider a nonlinear protocol as follows.
\begin{displaymath}
  \begin{array}{rrr}
\dot{x}_1 = x_1+x_2+1,\:\:\:&
\dot{x}_2 = x_1+x_3+3,\:\:\:& 
\dot{x}_3 = x_2-x_3.
\end{array}
\end{displaymath}
\noindent\(\pmb{\text{GroebnerBasis}[\{\{x_1+x_2+1,x_1+x_3+3,x_2-x_3\},}
\pmb{\{x_1-x_2,x_2-x_3\}\}]}\)

\noindent\(\{1\}\) }
\end{example}

For the system in Eq.~\eqref{eqnset}, a consensus is achieved if
there exists an $\alpha\in\mathbb{R}$ such that
$x_1=x_2=\cdots=x_N=\alpha$ is an equilibrium point. However,
systems for which $x_1=x_2=\cdots=x_N=\alpha$ is an equilibrium
point for all $\alpha\in\mathbb{R}$ are of particular interest in
the consensus community. Therefore, in the rest of the paper, we
consider only such protocols. The necessary condition as given in
Proposition~\ref{Consensus:Proposition:PolynomialProtocol:SumofIdealsVarietyNonEmptyforConsensus}
then amounts to $\mathcal{V}(J)\subseteq\mathcal{V}(I)$, where
$I=\langle f_1,f_2,\ldots,f_N\rangle$ and $J=\langle
x_2-x_1,x_3-x_2,\ldots,x_{N-1}-x_N\rangle$.

A consensus cannot be achieved over a graph that is disconnected.
Thus a necessary condition for a consensus to occur under a protocol
is that its dependency graph should be connected. In the case of a
linear protocol with a dependency graph that is directed, it should
be strongly connected (or at least should have a directed spanning
tree) for consensus to occur \cite{ren}. A result from algebraic
graph theory states that if a directed graph is strongly connected,
then for a linear consensus protocol the graph Laplacian matrix
associated with it is irreducible \cite{godsil}. This essentially
means that by a permutation or an automorphism of the variables, the
associated graph Laplacian can be made to be of a block diagonal
form. The number of blocks are equal to the number of disconnected
components of the graph \cite{ren}. Given a protocol $\mathbf{f} =
\{f_1,\ldots,f_N\}$, its dependency graph
$\mathcal{G}=(\mathcal{V},\mathcal{E})$ can be found. It is possible
to characterize the properties of this graph by defining the
following relation.
\begin{definition}{
\label{Consensus:Definition:PolynomialProtocol:PartialOrderBetweenNodes}
Given two nodes $i,j\in \mathcal{V}$, we say $i\prec_p j$ if there
exists a path from node $j$ to node $i$. If $i\prec_p j$ and
$j\prec_p i$, then nodes $i$ and $j$ are path-equivalent and we
denote this by $i\sim_p j$.}
\end{definition}
The relation $\prec_p$ is anti-symmetric since if $i\prec_p j$ and
$j\prec_p i$, then $i\sim_p j$. It is transitive. The above relation
is reflexive and hence a partial order if the dependency graph
contains self-loops.

We have the following results that are immediate from the definition
of the order relation $\prec_p$.
\begin{proposition}{
\label{Consensus:Proposition:PolynomialProtocol:GraphCharacterizationWithOrderRelation}
Let $\mathcal{G}=(\mathcal{V},\mathcal{E})$ be graph with
$\mathcal{V} = \{1,\ldots,N\}$. Let
$\text{Maximal}_{\prec_p}\mathcal{V}$ denote the set of maximal
elements in $\mathcal{V}$ under the order relation $\prec_p$. Then,
\begin{enumerate}
\item The graph $\mathcal{G}$ is strongly connected if and only if $\text{Maximal}_{\prec_p}\mathcal{V}=\mathcal{V}$
\item The graph $\mathcal{G}$ has a directed spanning tree if and
only if \\
$\#\left(\text{Maximal}_{\prec_p}\mathcal{V}/\sim_p\right)=1$
\end{enumerate}
where $\#$ denotes the cardinality of a set.}
\end{proposition}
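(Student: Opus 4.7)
The plan is to reduce both claims to standard facts about the condensation DAG $\widehat{\mathcal{G}}$ of $\mathcal{G}$, whose vertices are the strongly connected components (SCCs) of $\mathcal{G}$ and whose edges record reachability between distinct SCCs. First I would verify the translation between $\prec_p$ and $\widehat{\mathcal{G}}$: two vertices are $\sim_p$-equivalent iff they lie in a common SCC, so the equivalence classes of $\sim_p$ are exactly the SCCs; $\prec_p$ descends to a genuine partial order on $\widehat{\mathcal{G}}$ coinciding with the reverse of reachability, and a vertex $i \in \mathcal{V}$ is maximal for $\prec_p$ iff its SCC is a source of $\widehat{\mathcal{G}}$ (has no incoming edges from other SCCs).

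With this dictionary, part (1) becomes: $\mathcal{G}$ has a single SCC iff every SCC is a source. The forward direction is immediate, since strong connectivity collapses all of $\mathcal{V}$ into one equivalence class with no strictly smaller element. The converse needs the observation that if every SCC is a source, then $\widehat{\mathcal{G}}$ has no inter-SCC edges; combining this with the weak-connectivity of the dependency graph (a hypothesis already in force in this section, given Proposition~\ref{Consensus:Proposition:PolynomialProtocol:NecessaryConditionSupport} and the remark that consensus cannot occur on a disconnected graph) forces $\widehat{\mathcal{G}}$ to be a single vertex, yielding strong connectivity.

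For part (2) the key fact I would invoke is that a finite DAG has a unique source iff that source reaches every other vertex. $(\Rightarrow)$: a directed spanning tree with root $r$ makes $r$ reach every vertex of $\mathcal{G}$, so $[r]$ has a directed path to every other SCC in $\widehat{\mathcal{G}}$; any other source $[s]$ would then be reachable from $[r]$ while having no incoming edges, forcing $[s]=[r]$, so $\text{Maximal}_{\prec_p}\mathcal{V}/\sim_p$ is a singleton. $(\Leftarrow)$: from a unique source $[r]$, iteratively tracing incoming edges backward from any given vertex of $\widehat{\mathcal{G}}$ terminates at a source by finiteness and acyclicity, and that source must be $[r]$; reversing the trace produces a directed path from $[r]$ to the chosen vertex. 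Lifting to $\mathcal{G}$, any $r$ in the source SCC reaches every node, and a BFS tree rooted at $r$ furnishes the desired directed spanning tree.

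The main obstacle I anticipate is the subtlety in the reverse direction of part (1): taken literally, ``every vertex is maximal'' only forbids inter-SCC edges and does not by itself preclude $\mathcal{G}$ from being a disjoint union of several strongly connected pieces. I expect the cleanest route is to invoke the standing weak-connectivity hypothesis on the dependency graph, and to flag that without such a hypothesis the biconditional would have to be stated relative to each weakly connected component.
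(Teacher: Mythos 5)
Your proof is correct, but note that the paper itself offers no argument at all: it simply asserts the proposition as ``immediate from the definition of the order relation $\prec_p$.'' Your condensation-DAG route supplies exactly the details that assertion glosses over, and your dictionary is the right one given the paper's conventions ($i\prec_p j$ iff $j$ reaches $i$, so the $\sim_p$-classes are the strongly connected components and the maximal classes are precisely the source components, i.e.\ those with no incoming inter-component edges); the two standard DAG facts you invoke (every vertex of a finite DAG is reachable from some source; a source reaching everything is the unique source) settle part (2) cleanly and without any connectivity hypothesis. Your flag on part (1) is also well taken and is a defect of the paper's statement rather than of your argument: for a disjoint union of two directed cycles every vertex is maximal yet the graph is not strongly connected, so the ``if'' direction as literally stated is false for general $\mathcal{G}$. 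The repair you propose is the natural one --- the surrounding text does restrict attention to (weakly) connected dependency graphs, since consensus is impossible otherwise, and under that standing assumption ``all components are sources'' forces a single component, hence strong connectivity --- but be aware that this hypothesis appears only in the prose, not in the proposition, so your proof is in effect proving a corrected statement. Minor quibble: since $\prec_p$ need not be reflexive (the paper notes reflexivity only in the presence of self-loops), it is worth saying explicitly that you treat a vertex on no cycle as its own singleton $\sim_p$-class when forming the quotient; with that convention everything you wrote goes through.
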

The necessary and sufficient condition for achieving a consensus
under linear protocol over a graph is the existence of a directed
spanning tree \cite{ren}. In fact, the existence of a directed
spanning tree is a necessary condition not only for the linear
consensus protocol but also for any protocol that achieves consensus
over a graph.
\begin{figure}[htbp]
\centering
\includegraphics[scale=.3]{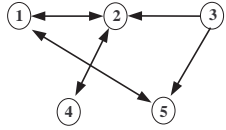}
\caption{Dependency graph with a directed spanning tree rooted at
node 3.}
\label{Consensus:Figure:PolynomialProtocol:DependencyGraphDirectedSpanningTree}
\end{figure}
\begin{figure}[htbp]
\centering {
\includegraphics[scale=.4]{}
\caption{Consensus under protocol in
Eq.~\eqref{Consnesus:Equation:PolynomialProtocol:LinearProtocol:DirectedSpanningTree}.}
\label{Consensus:Figure:PolynomialProtocol:DependencyGraphDirectedSpanningTree:Example4}
}
\end{figure}

\begin{example}
{
\label{Consnesus:Example:PolynomialProtocol:LinearProtocol:DirectedSpanningTree}
The linear protocol
\begin{equation}
\dot{x}_i=\sum_{j\in\mathcal{N}_i} x_j - {|\mathcal{N}_i|}x_i
\label{Consnesus:Equation:PolynomialProtocol:LinearProtocol:DirectedSpanningTree}
\end{equation}
applied to dependency graph given in
Fig.~\ref{Consensus:Figure:PolynomialProtocol:DependencyGraphDirectedSpanningTree}
is
\begin{displaymath}
  \begin{array}{lllll}
\dot{x}_1 = x_2+x_5 - 2x_1, \:\:\:&
\dot{x}_2 = x_1+x_3+x_4-3 x_2,\:\:\:&
\dot{x}_3 = 0,\:\:\:&\\
\dot{x}_4 = x_2-x_4,\:\:\:&
\dot{x}_5 = x_1+x_3-2 x_5.
\end{array}
\end{displaymath}
Convergence to consensus under this protocol is given in
Fig.~\ref{Consensus:Figure:PolynomialProtocol:DependencyGraphDirectedSpanningTree:Example4}.
}
\end{example}
\begin{example}
{
\label{Consnesus:Example:PolynomialProtocol:SquaredProtocol:DirectedSpanningTree}
We now consider consensus under a nonlinear protocol
\begin{equation}
\dot{x}_i=\sum_{j\in\mathcal{N}_i} x_j^2 - {|\mathcal{N}_i|}x_i^2
\label{Consnesus:Equation:PolynomialProtocol:SquaredProtocol:DirectedSpanningTree}
\end{equation}
for the dependency graph in
Fig.~\ref{Consensus:Figure:PolynomialProtocol:DependencyGraphDirectedSpanningTree}
\begin{displaymath}
  \begin{array}{lllll}
\dot{x}_1 = x_2^2+x_5^2 - 2x_1^2, \:\:\:&
\dot{x}_2 = x_1^2+x_3^2+x_4^2-3 x_2^2,\:\:\:&
\dot{x}_3 = 0,\:\:\:\\
\dot{x}_4 = x_2^2-x_4^2,\:\:\:&
\dot{x}_5 = x_1^2+x_3^2-2 x_5^2.
\end{array}
\end{displaymath}
Figure~\ref{Consensus:Figure:PolynomialProtocol:DependencyGraphDirectedSpanningTree:Example5}
shows the convergence to consensus of this protocol. }
\end{example}

\begin{figure}[htbp]
\centering {
\includegraphics[scale=.4]{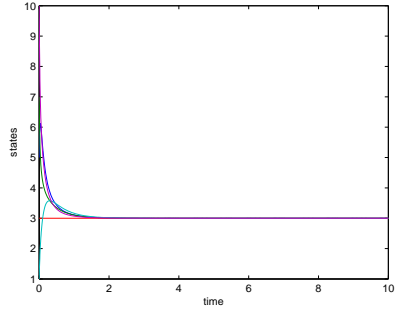}
\caption{Consensus under protocol in
Eq.~\eqref{Consnesus:Equation:PolynomialProtocol:SquaredProtocol:DirectedSpanningTree}.}
\label{Consensus:Figure:PolynomialProtocol:DependencyGraphDirectedSpanningTree:Example5}
}
\end{figure}
\begin{example}
 {\label{Consnesus:Example:PolynomialProtocol:ProductProtocol:DirectedSpanningTree} The third protocol considered earlier is
\begin{equation}
\dot{x}_i=\prod_{j\in\mathcal{N}_i} x_j - x_i^{|\mathcal{N}_i|}
\label{Consnesus:Equation:PolynomialProtocol:ProductProtocol:DirectedSpanningTree}
\end{equation}
Applied to dependency graph of
Fig.~\ref{Consensus:Figure:PolynomialProtocol:DependencyGraphDirectedSpanningTree},
we get
\begin{displaymath}
  \begin{array}{rrrrr}
    \dot{x}_1 = x_2x_5 - x_1^2,\:\:\: &
    \dot{x}_2 = x_1x_3x_4- x_2^3,\:\:\: &
    \dot{x}_3 = 0, \:\:\:&
    \dot{x}_4 = x_2-x_4, \:\:\:&
    \dot{x}_5 = x_1x_3- x_5^2.
  \end{array}
\end{displaymath}
Figure~\ref{Consensus:Figure:PolynomialProtocol:DependencyGraphDirectedSpanningTree:Example6}
shows that this protocol leads to consensus. }
\end{example}
\begin{figure}[htbp]
\begin{center}
\includegraphics[scale=.4]{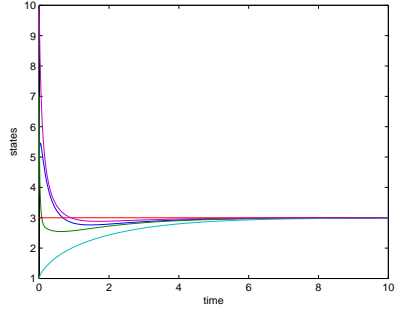}
\end{center}
\caption{Consensus under protocol in
Eq.~\eqref{Consnesus:Equation:PolynomialProtocol:ProductProtocol:DirectedSpanningTree}.}
\label{Consensus:Figure:PolynomialProtocol:DependencyGraphDirectedSpanningTree:Example6}\end{figure}
Examples~\ref{Consnesus:Example:PolynomialProtocol:LinearProtocol:DirectedSpanningTree}--\ref{Consnesus:Example:PolynomialProtocol:ProductProtocol:DirectedSpanningTree}
give protocols all of which had a dependency graph that is not
strongly connected but still achieve consensus. This shows that the
dependency graph of a protocol being strongly connected is not a
necessary condition for consensus to occur.

Using the elimination theorem from algebraic geometry, we can get a
 result that enables us to determine whether the dependency graph of a
polynomial protocol has a directed spanning tree. To state the
elimination theorem, we need concepts of elimination order and
elimination ideal as defined below.
      \begin{definition}{\label{Consensus:Definition:PolynomialProtocol:EliminationOrder} 
      Consider $\Bbbk[x_{1},\ldots,x_{n},y_{1},\ldots,y_{m}]$ a
            polynomial ring in indeterminates $x_{1}, \ldots, x_{n}$,
            $y_{1}, \ldots, y_{m}$.
	    Let $\prec_{x}$
            and $\prec_{y}$
            be monomial orderings on $x$ and $y$ variables
            respectively. Define an ordering relation $\prec_{[\{x\} \succ \{y\}]}$ on
            ${\mathbb{Z}}_{\geq 0}^{n+m}$ (i.e set of all monomials in
            indeterminates $x_{1},\ldots, x_{n}, y_{1},\ldots, y_{m}$)
            as follows:
        \begin{displaymath}
         x^{a} y^{b}
         \prec_{[\{x\} \succ \{y\}]} x^{c} y^{d}
         \Longleftrightarrow
         \left\{ \begin{array}{l}
                     a\prec_{x} c \\
                 \mathrm{or}\\
                          a = c
                            \:\:\mathrm{and}\:\:
              b \prec_{y} d
            \end{array} \right. \enspace,
        \end{displaymath}
        where $a, c \in {\mathbb{Z}}_{\geq
        0}^{n}$ and $b, d \in {\mathbb{Z}}_{\geq
        0}^{m}$. The term order $\prec_{[\{x\} \succ \{y\}]}$ is called
        {\bf elimination order} with the $x$ variables larger than the $y$
        variables (which is indeed a term order).}
      \end{definition}
\begin{definition}
{\label{Consensus:Definition:PolynomialProtocol:EliminationIdeal}
Given $I=\langle f_1,\ldots,f_s\rangle \subset
\Bbbk[x_1,\ldots,x_N]$ the $l$-th {\bf elimination ideal} $I_l$
is the ideal of $\Bbbk[x_{l+1},\ldots,x_N]$ defined by $I_l = I
\cap \Bbbk[x_{l+1},\ldots,x_N]$}
\end{definition}
Now, we sate the elimination theorem.
\begin{theorem}
{\cite{cox} Let $I\subset \Bbbk[x_1,\ldots,x_N]$ be an ideal
and let $G$ be a Gr\"{o}bner Basis of $I$ with respect to an
elimination order where $\{x_1,x_2,\ldots,x_l\}\succ
\{x_{l+1},\ldots,x_N\}$ for every $0\leq l\leq N$. Then the set
\[G_l = G \cap \Bbbk[x_{l+1},\ldots,x_N]\] is a Gr\"{o}bner Basis of
the $l$-th elimination ideal $I_l$.}
\end{theorem}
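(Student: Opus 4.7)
The plan is to reduce the claim to the defining property of a Gr\"obner basis, exploiting the one structural feature of elimination orders that generic monomial orders do not have. First I would dispatch the easy inclusion $\langle G_l\rangle \subseteq I_l$: since $G \subseteq I$ and, by construction, every $g \in G_l$ lies in $\Bbbk[x_{l+1},\ldots,x_N]$, we immediately get $G_l \subseteq I \cap \Bbbk[x_{l+1},\ldots,x_N] = I_l$. Note also that the restriction of the elimination order $\prec_{[\{x_1,\ldots,x_l\}\succ\{x_{l+1},\ldots,x_N\}]}$ to $\Bbbk[x_{l+1},\ldots,x_N]$ is itself a monomial order (it coincides with $\prec_y$), so it makes sense to speak of Gr\"obner bases of $I_l$ with respect to it.

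The core of the argument is to establish $\mathrm{LM}_\prec(I_l) = \langle \mathrm{LM}_\prec(g) : g \in G_l \rangle$. Fix any nonzero $f \in I_l$. Since $f \in I$ and $G$ is a Gr\"obner basis of $I$, by the definition of Gr\"obner basis there exists $g \in G$ with $\mathrm{LM}_\prec(g)$ dividing $\mathrm{LM}_\prec(f)$. Because $f$ lies in $\Bbbk[x_{l+1},\ldots,x_N]$, its leading monomial $\mathrm{LM}_\prec(f)$ involves none of $x_1,\ldots,x_l$, and the divisibility then forces $\mathrm{LM}_\prec(g)$ to involve only $x_{l+1},\ldots,x_N$ as well. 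I would then promote this from a statement about $\mathrm{LM}_\prec(g)$ to one about $g$ itself: by the defining inequality of the elimination order, any monomial in which some $x_i$ with $i \le l$ appears with positive exponent strictly dominates every monomial in $\Bbbk[x_{l+1},\ldots,x_N]$. Hence if $g$ contained a term involving $x_1,\ldots,x_l$, its monomial would strictly exceed $\mathrm{LM}_\prec(g)$, contradicting maximality. Therefore $g \in \Bbbk[x_{l+1},\ldots,x_N]$, so $g \in G_l$, and the Gr\"obner basis condition for $I_l$ is verified.

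The main obstacle, and the only step with any content, is exactly that promotion: deducing from $\mathrm{LM}_\prec(g) \in \Bbbk[x_{l+1},\ldots,x_N]$ that the \emph{entire} polynomial $g$ lies in $\Bbbk[x_{l+1},\ldots,x_N]$. This is precisely the property that distinguishes an elimination order from a generic monomial order, and it is what makes the theorem work. Once this observation is isolated, the remaining verification is routine bookkeeping with the definitions of leading monomial and Gr\"obner basis.
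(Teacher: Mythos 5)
Your argument is correct: the easy inclusion $G_l\subseteq I_l$, the divisibility step using the Gr\"obner property of $G$, and the key ``promotion'' step---that under an elimination order a polynomial whose leading monomial avoids $x_1,\ldots,x_l$ must lie entirely in $\Bbbk[x_{l+1},\ldots,x_N]$---together give exactly the standard proof of the Elimination Theorem. The paper itself offers no proof (it states the theorem without proof, citing Cox, Little, and O'Shea), and your argument is essentially the one found in that reference; the only omitted detail, the reverse inclusion $\langle \mathrm{LM}_\prec(g):g\in G_l\rangle\subseteq \mathrm{LM}_\prec(I_l)$, follows immediately from $G_l\subseteq I_l$, as you implicitly note.
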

If the dependency graph of a linear consensus protocol contains a
directed spanning tree, then the corresponding Laplacian matrix,
$L$, will be of rank $N-1$ \cite{ren}. This means that it is
possible to eliminate $N-2$ variables from the protocol
$L\mathbf{x}$ using Gaussian elimination. Since the elimination in
the multivariate polynomial case is the generalization of the
Gaussian elimination, we expect this nice property of being able to
eliminate $N-2$ variables in the linear protocol case to carry over
to polynomial protocols that attain consensus.
\begin{theorem}
{
\label{Consensus:Theorem:PolynomialProtocol:EliminationIdealNotEmpty}
Let $f_i \in \Bbbk[x_1,\ldots,x_N],$ $i=1,\ldots,N.$ Let
$I=\langle f_1,f_2,\ldots,f_N\rangle$ be such that
$\mathcal{V}(J)\subseteq\mathcal{V}(I)$ for $J=\langle
x_2-x_1,x_3-x_,\ldots,x_{N-1}-x_N\rangle$. If the system $\dot{x}_i
= f_i,i=1,\ldots,N$
  attains consensus, then
  the $(N-2)$-th elimination ideal of $\langle
  f_1,\ldots,f_N\rangle$ is not empty under any elimination order
  with $\{x_{\sigma(1)},\ldots,x_{\sigma(N-2)}\}\prec \{x_{\sigma(N-1)},x_{\sigma(N)}\}$ where
  $\sigma$ belongs to the set of every possible permutation of
  $\{1,\ldots,N\}$.}
\end{theorem}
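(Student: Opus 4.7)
The plan is to convert the dynamical hypothesis of consensus attainment into an algebraic containment of varieties, and then exhibit via Hilbert's Nullstellensatz an explicit nonzero polynomial inside every elimination ideal of the required form. The only step that uses the dynamics is the first; the remainder is pure commutative algebra.

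First I would show that $\mathcal{V}(I) = \mathcal{V}(J)$. The hypothesis already gives $\mathcal{V}(J) \subseteq \mathcal{V}(I)$. For the reverse containment, observe that any $p \in \mathcal{V}(I)$ is an equilibrium, so starting from $p$ the solution to $\dot{\mathbf{x}} = \mathbf{f}(\mathbf{x})$ is the constant trajectory $\mathbf{x}(t) \equiv p$; since consensus is attained from every initial condition, this trajectory must converge to a point of the diagonal, which forces $p \in \mathcal{V}(J)$. The preliminary discussion records that $J$ is a prime ideal, hence radical, so Hilbert's Strong Nullstellensatz applied to both sides of $\mathcal{V}(I) = \mathcal{V}(J)$ yields
\[
\sqrt{I} \;=\; \mathcal{I}(\mathcal{V}(I)) \;=\; \mathcal{I}(\mathcal{V}(J)) \;=\; J.
\]

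Next, fix an arbitrary permutation $\sigma$ of $\{1,\ldots,N\}$ and set $p_\sigma := x_{\sigma(N-1)} - x_{\sigma(N)}$. Telescoping the generators $x_k - x_{k+1}$ of $J$ along a chain between the indices $\sigma(N-1)$ and $\sigma(N)$ shows $p_\sigma \in J = \sqrt{I}$, so some power $p_\sigma^m$ lies in $I$. Because $p_\sigma^m \in \Bbbk[x_{\sigma(N-1)}, x_{\sigma(N)}]$, it belongs to the intersection $I \cap \Bbbk[x_{\sigma(N-1)}, x_{\sigma(N)}]$, which by the definition of elimination ideal (together with the elimination theorem, using any admissible elimination order that separates $\{x_{\sigma(1)}, \ldots, x_{\sigma(N-2)}\}$ from $\{x_{\sigma(N-1)}, x_{\sigma(N)}\}$) is the $(N-2)$-th elimination ideal. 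Since $p_\sigma^m$ is a nonzero polynomial, this elimination ideal contains a nonzero element, which is the claim.

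The main obstacle is the very first step: passing from the dynamical statement ``consensus is attained'' to the algebraic containment $\mathcal{V}(I) \subseteq \mathcal{V}(J)$. The argument above relies on interpreting consensus as convergence from every initial condition, as formalized in the introduction; under that reading any off-diagonal equilibrium produces a trajectory that never reaches consensus, yielding the needed containment. Under a weaker reading (convergence from almost every, or merely from some open set of, initial conditions) isolated unstable equilibria off the diagonal may survive, and the argument would need to be refined, for instance by first restricting attention to the irreducible component of $\mathcal{V}(I)$ containing $\mathcal{V}(J)$ and working with that component's ideal before invoking the Nullstellensatz. The remaining algebraic steps are routine applications of the Nullstellensatz and the definition of elimination ideals.
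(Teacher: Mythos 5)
Your reading of the conclusion (exhibit a nonzero element of $I\cap\Bbbk[x_{\sigma(N-1)},x_{\sigma(N)}]$) is the right one, and the dynamical first step is fine under the paper's definition of consensus: a real equilibrium off the diagonal would be an initial condition from which consensus is never reached. The genuine gap is the step $\sqrt{I}=\mathcal{I}(\mathcal{V}(I))=\mathcal{I}(\mathcal{V}(J))=J$. The consensus hypothesis only controls \emph{real} common zeros of the $f_i$, so what you actually obtain is a containment of real varieties; Hilbert's Strong Nullstellensatz needs an algebraically closed field, and over $\mathbb{R}$ one only has the real Nullstellensatz, which identifies $\mathcal{I}(\mathcal{V}_{\mathbb{R}}(I))$ with the real radical, not with $\sqrt{I}$. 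Over $\mathbb{C}$, where the Strong Nullstellensatz does apply, your dynamical argument says nothing about nonreal equilibria. The intermediate claim $\sqrt{I}=J$ is in fact false for perfectly good consensus protocols: take $N=3$ with $f_1=(x_3-x_1)\bigl(1+(x_3-x_1)^2\bigr)$, $f_2=(x_3-x_2)\bigl(1+(x_3-x_2)^2\bigr)$, $f_3=0$. All hypotheses hold and consensus is attained ($x_3$ stays constant and $x_1,x_2$ converge to it monotonically), yet no power of $x_1-x_3$ lies in $I$: every element of $I$ vanishes at the common zero $(x_1,x_2,x_3)=(-i,0,0)$, while $(x_1-x_3)^m$ does not. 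So the witness $p_\sigma^m$ you propose need not exist; in this example the elimination ideal retaining $\{x_1,x_3\}$ is still nonzero (it contains $f_1$ itself), but your argument does not produce such an element. The refinement you sketch at the end (restricting to the component containing the diagonal) addresses the weaker reading of consensus, not this field-of-definition problem; a repair would have to work with the real radical or argue directly about the Zariski closure of the projection of $\mathcal{V}(I)$, which is no longer routine.

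For comparison, the paper does not argue through radicals at all: its (very brief) justification is graph-theoretic, namely that consensus forces the dependency graph to contain a directed spanning tree, and that, in analogy with Gaussian elimination on the rank-$(N-1)$ Laplacian in the linear case, this should allow $N-2$ variables to be eliminated, i.e.\ force a nonzero $(N-2)$-th elimination ideal. That sketch avoids the $\sqrt{I}=J$ claim entirely, so your route is genuinely different; but as written it rests on a Nullstellensatz application that is invalid over $\mathbb{R}$ and on an intermediate equality that the counterexample above refutes.
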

This follows from the fact that a consensus is achieved only if the
dependency graph has a directed spanning tree. We illustrate this
through the following example.
\begin{example}
{\label{Consensus:Example:PolynomialProtocol:NotPossibleToEliminate}
Consider a polynomial protocol given as
\begin{displaymath}
  \begin{array}{r r r r}
    \dot{x}_1 = x_2 - x_1,\:\:\: &
    \dot{x}_2 = x_1- x_2, \:\:\: &
    \dot{x}_3 = 0, \:\:\:&
    \dot{x}_4 = x_1x_2x_3-x_4^3.
  \end{array}
\end{displaymath}
The corresponding dependency graph is shown in
Fig.~\ref{Consensus:Figure:PolynomialProtocol:NoDirectedSpanningTree}.
This dependency graph does not have a directed spanning tree and
thus will not attain a consensus. Thus by
Theorem~\ref{Consensus:Theorem:PolynomialProtocol:EliminationIdealNotEmpty},
there should exist an elimination order under which it may not
possible to eliminate 2 variables from the Gr\"{o}bner Basis of
$\{x_2 - x_1,x_1- x_2,x_1x_2x_3-x_4^3\}$. If we try to eliminate the
highest two variables under the elimination order
$\{x_4,x_3\}\prec\{x_2,x_1\}$, we get an empty set implying that the
elimination is not possible.

\noindent\(\pmb{\text{GroebnerBasis}[\{x_2-x_1,x_1-x_2,x_1x_2x_3-x_4^3\},\{x_4,x_3\},\{x_2,x_1\}]}\)

\noindent\(\{\}\) }
\end{example}
\begin{figure}[htbp]
\begin{center}
\includegraphics[scale=.3]{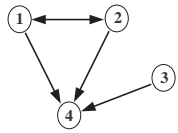}
\end{center}
\caption{Dependency graph of protocol in
Example~\ref{Consensus:Example:PolynomialProtocol:NotPossibleToEliminate}}
\label{Consensus:Figure:PolynomialProtocol:NoDirectedSpanningTree}
\end{figure}

It is possible to extend the result in
Theorem~\ref{Consensus:Theorem:PolynomialProtocol:EliminationIdealNotEmpty},
to the following.
\begin{proposition}
{\label{Consensus:Proposition:PolynomialProtocol:DisconnectedComponents}
  Let $\mathbf{f}=(f_1,\ldots,f_N) \in \Bbbk[x_1,\ldots,x_N]^N$ be such that
$\mathcal{V}(J)\subseteq\mathcal{V}(I)$ for $J=\langle
x_2-x_1,x_3-x2,\ldots,x_{N-1}-x_N\rangle$ and $I=\langle
f_1,f_2,\ldots,f_N\rangle$. Let $l\in\mathbb{N}$ be the smallest $l$
such
  that the $l$-th elimination ideal of $\langle
  f_1,\ldots,f_N\rangle$ is empty under an elimination order with
  $\{x_{i_1},\cdots,x_{i_l}\}\succ \{x_{i_{l+1}},\ldots,x_{i_N}\}$ where $\{i_1,\ldots,i_N\}$ is a
  permutation of $\{1,\ldots,N\}$. Then, $\#\left(\text{Maximal}_{\prec_p}\mathcal{V}/\sim_p\right)=N-l$.
 }
 \end{proposition}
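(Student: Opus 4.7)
The plan is to decompose the dependency graph into its source strongly connected components (SCCs) and convert the SCC count into a dimension count for the equilibrium variety $\mathcal{V}(I)$, after which the elimination statement follows from the geometric form of the elimination theorem. Set $k = \#(\text{Maximal}_{\prec_p}\mathcal{V}/\sim_p)$, and by part 2 of Proposition~\ref{Consensus:Proposition:PolynomialProtocol:GraphCharacterizationWithOrderRelation} identify this with the number of source SCCs $C_1,\ldots,C_k$ in the condensation DAG of $\mathcal{G}$. Because each $C_j$ is a source SCC, every $f_i$ with $i\in C_j$ involves only variables $\{x_t:t\in C_j\}$, so the source subsystems decouple from the rest of the network.

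First I would use the hypothesis $\mathcal{V}(J)\subseteq\mathcal{V}(I)$ to build a $k$-parameter family of equilibria. Assign $x_i=\alpha_j$ to every $i\in C_j$ for arbitrary $\alpha_1,\ldots,\alpha_k\in\Bbbk$: because each source subsystem is decoupled, it inherits the consensus-equilibrium property from the global hypothesis, so $f_i=0$ for every $i\in C_1\cup\cdots\cup C_k$. Then solve $f_i=0$ for the remaining nodes $i\in R=\mathcal{V}\setminus(C_1\cup\cdots\cup C_k)$ by topological induction on the condensation DAG, using the consensus-equilibrium hypothesis at each step. This exhibits a $k$-dimensional subvariety of $\mathcal{V}(I)$, yielding $\dim\mathcal{V}(I)\geq k$. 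Under a mild non-degeneracy assumption (each non-source $f_i=0$ imposes a genuine codimension-one constraint on its support), the matching bound $\dim\mathcal{V}(I)\leq k$ holds, so $\dim\mathcal{V}(I)=k$.

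Next I would invoke the closure theorem: if the permutation chooses $l$ eliminated variables, then $\mathcal{V}(I_l)=\overline{\pi(\mathcal{V}(I))}$, where $\pi\colon\Bbbk^N\to\Bbbk^{N-l}$ is the projection onto the retained coordinates $x_{i_{l+1}},\ldots,x_{i_N}$. The ideal $I_l$ is the zero ideal precisely when $\mathcal{V}(I_l)=\Bbbk^{N-l}$, which forces $N-l\leq\dim\mathcal{V}(I)=k$, i.e.\ $l\geq N-k$. Equality $l=N-k$ is achieved by the permutation that retains exactly one representative variable from each $C_j$: the $k$-parameter family above projects surjectively onto $\Bbbk^k$, so $I_{N-k}=\{0\}$. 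Minimality then identifies $l=N-k$, giving $\#(\text{Maximal}_{\prec_p}\mathcal{V}/\sim_p)=k=N-l$.

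The main obstacle will be the rigorous upper bound $\dim\mathcal{V}(I)\leq k$: the stated hypothesis $\mathcal{V}(J)\subseteq\mathcal{V}(I)$ alone does not preclude degenerate $f_i$ (for instance, $f_i\equiv 0$ everywhere would give $\dim\mathcal{V}(I)=N$), and the inductive solvability for nodes in $R$ needs their defining equations to have solutions in $\Bbbk$ rather than merely in the algebraic closure. A clean proof therefore requires either an implicit generic-position hypothesis on the protocol or a passage to $\overline{\Bbbk}$; the dimension count across the $k$ source SCCs is the conceptual heart, and the remaining issues are subsidiary.
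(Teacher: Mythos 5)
The paper itself offers no proof of this proposition --- it is asserted as an ``extension'' of Theorem~\ref{Consensus:Theorem:PolynomialProtocol:EliminationIdealNotEmpty} and then merely checked on Example~\ref{Consensus:Example:PolynomialProtocol:NotPossibleToEliminate} --- so your argument has to stand on its own, and it does not. The first genuine gap is the construction of the $k$-parameter family by ``topological induction'' on the condensation: for a non-source node $i$ the hypothesis $\mathcal{V}(J)\subseteq\mathcal{V}(I)$ only says $f_i$ vanishes when \emph{all} states coincide; once the source components carry distinct values $\alpha_j$ there is no reason $f_i=0$ is solvable, and $x_i$ need not even occur in $f_i$. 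Concretely, take $N=5$, $f_1=x_2-x_1$, $f_2=x_1-x_2$, $f_3=x_4-x_3$, $f_4=x_3-x_4$, $f_5=x_1-x_3$: the source classes are $\{1,2\}$ and $\{3,4\}$, so $k=2$, but there is no equilibrium in which the two sources hold different values, and your proposed witness permutation (retain one representative per source class, say $x_2,x_4$) has a \emph{nonzero} elimination ideal because $x_2-x_4=-(x_1-x_2)+(x_1-x_3)+(x_3-x_4)\in I$. The conclusion $l=N-k$ survives in this example only because $x_5$ is unconstrained, i.e.\ via a retained set mixing one source variable with a downstream variable; so both your lower-bound construction and your choice of permutation fail as stated. (The inequality $\dim\mathcal{V}(I)\geq k$ itself can be rescued over $\overline{\Bbbk}$ by Krull's principal ideal theorem, starting from the product of the $k$ source diagonals times the affine space of the remaining coordinates and cutting by the non-source equations one at a time, but that is a different argument from the one you give.)

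The second gap is more serious than you allow: the bound $\dim\mathcal{V}(I)\leq k$ is not a ``subsidiary issue'' with a missing genericity lemma --- without an added nondegeneracy hypothesis the proposition is simply false, so no completion of your argument from the stated hypotheses can exist. Take $N=6$ with the two sources above, $f_5=x_1-x_2$ and $f_6=x_5-x_6$: every stated hypothesis holds (all variables appear, no zero polynomials, $\mathcal{V}(J)\subseteq\mathcal{V}(I)$), the maximal classes are still $\{1,2\}$ and $\{3,4\}$ so $\#\left(\text{Maximal}_{\prec_p}\mathcal{V}/\sim_p\right)=2$, yet $I=\langle x_1-x_2,\,x_3-x_4,\,x_5-x_6\rangle$ satisfies $I\cap\Bbbk[x_2,x_4,x_6]=\{0\}$, so the minimal $l$ is at most $3$ while $N-k=4$. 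Your own closing paragraph concedes that degenerate $f_i$ are not excluded, but a correct review of the claim has to conclude that the proposition needs a strengthened hypothesis (ruling out non-source equations whose constraints are already implied upstream, or an explicit transversality/dimension condition), not that the gap is peripheral. Finally, note that the closure theorem and the equivalence ``$I_l=\{0\}$ iff the projection of $\mathcal{V}(I)$ is Zariski dense'' require $\Bbbk$ algebraically closed (and at least infinite), whereas the consensus setting is over $\mathbb{R}$; any repaired proof must either pass to $\overline{\Bbbk}$ throughout or address real points separately.
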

 For
 Example~\ref{Consensus:Example:PolynomialProtocol:NotPossibleToEliminate},
 we have $N=4$, $l=2$, and from Fig.~\ref{Consensus:Figure:PolynomialProtocol:NoDirectedSpanningTree}
 we see that
 \begin{displaymath}
 \#\left(\text{Maximal}_{\prec_p}\mathcal{V}/\sim_p\right)=2.
 \end{displaymath}

\section{Concluding Remarks}
\label{Conclusions}
We explored a novel way of looking at the consensus over
networks--the algebraic geometric way--when the protocols are
polynomials. We gave several necessary conditions for obtaining
consensus under polynomial protocols. This enables one to comment on
the convergence to consensus of a protocol by looking at the
computed Gr\"{o}bner Basis of the set of polynomials in the
protocol. We also gave a sufficient condition for convergence under
polynomial consensus. We conclude by remarking that algebraic
geometry, if used properly, has sufficient powerful tools to analyze
consensus under switching protocols and to design consensus
protocols with desired behaviors.



\end{document}